\newtheorem{theorem}{Theorem}
\newtheorem{lemma}[theorem]{Lemma}
\newtheorem{corollary}[theorem]{Corollary}
\theoremstyle{definition}
\theoremstyle{remark}
\renewcommand{\Pr}{\mathrm{\sc Pr}}
\newcommand{\E}{\mathbb{E}}
\newcommand{\opt}{\mathsf{opt}}
\newcommand{\poly}{\mathrm{poly}}
\newcommand{\calG}{\mathcal{G}}
\newcommand{\calF}{\mathcal{F}}
\newcommand{\calT}{\mathcal{T}}
\newcommand{\calH}{\mathcal{H}}
\newcommand{\calQ}{\mathcal{Q}}
\newcommand{\calR}{\mathcal{R}}
\newcommand{\calZ}{\mathcal{Z}}
\newcommand{\cost}{\mathsf{cost}}
\renewcommand{\deg}{\mathsf{deg}}
\newcommand{\indeg}{\mathsf{indeg}}
\newcommand{\outdeg}{\mathsf{outdeg}}
\renewcommand{\setminus}{-}
\newcommand{\polylog}{\mathrm{polylog}}
\begin{document}

\title{Approximating Directed Steiner Problems via Tree Embedding}

\author{Bundit Laekhanukit\thanks{
The Weizmann Institute of Science, Israel,
email:\url{bundit.laekhanukit@weizmann.ac.il}.
}.\footnote{
The work was partly done while the author was at 
McGill University, Simons Institute for the Theory of Computing
and the Swiss AI Lab IDSIA.
Partially supported by the ERC Starting Grant NEWNET 279352 and
by Swiss National Science Foundation project 200020\_144491/1.
}}











\maketitle

\begin{abstract}
Directed Steiner problems are fundamental problems in 
Combinatorial Optimization and Theoretical Computer Science.
An important problem in this genre is
the {\em $k$-edge connected directed Steiner tree} ($k$-DST) problem. 
In this problem, we are given a directed graph $G$ on $n$ vertices
with edge-costs, a root vertex $r$, a set of $h$ terminals $T$ 
and an integer $k$.
The goal is to find a min-cost subgraph $H\subseteq G$
that connects $r$ to each terminal $t\in T$ by
$k$ edge-disjoint $r,t$-paths.
This problem includes as special cases the well-known 
{\em directed Steiner tree} (DST) problem (the case $k=1$)
and
the {\em group Steiner tree} (GST) problem. 
Despite having been studied and mentioned many times in literature, 
e.g., by Feldman~et~al. [SODA'09, JCSS'12], 
by Cheriyan~et~al. [SODA'12, TALG'14]
and by Laekhanukit [SODA'14],
there was no known non-trivial approximation algorithm for $k$-DST
for $k \geq 2$ even in the special case that an input graph is 
directed acyclic and has a constant number of layers.
If an input graph is not acyclic, 
the complexity status of $k$-DST is not known even for 
a very strict special case that $k=2$ and $|T|=2$.

In this paper, we make a progress toward developing a non-trivial
approximation algorithm for $k$-DST.
We present an $O(D\cdot k^{D-1}\cdot\log n)$-approximation 
algorithm for $k$-DST on directed acyclic graphs (DAGs) with $D$
layers, which can be extended to a special case of $k$-DST 
on ``general graphs'' when an instance has 
a {\em $D$-shallow} optimal solution, i.e., 
there exist $k$ edge-disjoint $r,t$-paths, each of length at most $D$,
for every terminal $t\in T$.
For the case $k=1$ (DST), our algorithm yields an approximation ratio
of $O(D\log h)$, thus implying an $O(\log^3 h)$-approximation
algorithm for DST that runs in quasi-polynomial-time
(due to the height-reduction of Zelikovsky [Algorithmica'97]).
Our algorithm is based on an LP-formulation that allows us to embed a
solution to a tree-instance of GST, 
which does not preserve connectivity. 
We show, however, that one can randomly extract a solution of $k$-DST
from the tree-instance of GST.

Our algorithm is almost tight when $k, D$ are constants since the
case that $k=1$ and $D=3$ is NP-hard to approximate to within a factor
of $O(\log h)$, and our algorithm archives the same approximation 
ratio for this special case. 
We also remark that the $k^{1/4-\epsilon}$-hardness instance of
$k$-DST is a DAG with $6$ layers, and our algorithm gives 
$O(k^5\log n)$-approximation for this special case.
Consequently, as our algorithm works for general graphs,
we obtain an $O(D\cdot k^{D-1}\cdot\log n)$-approximation
algorithm for a $D$-shallow instance of 
the {\em $k$ edge-connected directed Steiner subgraph} problem, where
we wish to connect every pair of terminals by $k$ edge-disjoint paths.
\end{abstract}

\section{Introduction}
\label{sec:intro}

Network design is an important class of problems in Combinatorial
Optimization and Theoretical Computer Science as 
it formulates scenarios that appear in practical settings. 
In particular, we might wish to design an overlay network that
connects a server to clients, and this can be formulated as 
the {\em Steiner tree} problem.
In a more general setting, we might have an additional constraint
that the network must be able to function after 
link or node failures, leading to 
the formulation of the {\em survivable network design} problem. 
These problems are well-studied in symmetric case
where a network can be represented by an undirected graph. 
However, in many practical settings,
links in networks are not symmetric.
For example, we might have different upload and download bandwidths in
each connection, and sometimes, transmissions are only allowed in one
direction.
This motivates the study of network design problems in directed
graphs, in particular, {\em directed Steiner} problems.

One of the most well-known directed network design problem 
is the {\em directed Steiner tree} problem (DST), which asks to find a
minimum-cost subgraph that connects a given root vertex to each
terminal.
DST is a notorious problem as there is no known polynomial-time
algorithm that gives an approximation ratio better than polynomial. 
A polylogarithmic approximation can be obtained only when an algorithm
is allowed to run in quasi-polynomial-time
\cite{CharikarCCDGGL99,Rothvoss11,FriggstadKKLST14}.
A natural generalization of DST, namely, 
the {\em $k$ edge-connected directed Steiner tree} ($k$-DST) problem,
where we wish to connect a root vertex to each terminal by $k$
edge-disjoint paths, is even more mysterious as there is no known
non-trivial approximation algorithm, despite having been studied and
mentioned many times in literature, e.g., 
by Feldman~et~al.~\cite{FeldmanKN12}, 
by Cheriyan~et~al.~\cite{CheriyanLNV14}
and by Laekhanukit~\cite{Laekhanukit14}. 

The focus of this paper is in studying the approximability of $k$-DST.
Let us formally describe $k$-DST. 
In $k$-DST, we are given a directed graph $G$ with edge-costs
$\{c_e\}_{e\in E(G)}$, a root vertex $r$ and 
a set of terminals $T\subseteq V(G)$.
The goal is to find a min-cost subgraph $H\subseteq G$ such that
$H$ has a $k$ edge-disjoint directed $r,t$-paths from the root $r$ to
each terminal $t\in T$. 
Thus, removing any $k-1$ edges from $H$ leaves at least one path from
the root $r$ to each terminal $t\in T$,
and DST is the case when $k=1$ (i.e., we need only one path).
The complexity status of $k$-DST tends to be negative. 
It was shown by Cheriyan~et~al. \cite{CheriyanLNV14} that
the problem is at least as hard as the {\em label cover} problem.
Specifically, $k$-DST admits no
$2^{\log^{1-\epsilon}n}$-approximation, for any $\epsilon>0$,
unless $\mathrm{NP}\subseteq\mathrm{DTIME}(2^{\polylog(n)})$.
Laekhanukit~\cite{Laekhanukit14}, subsequently, showed that
$k$-DST admits no $k^{1/4-\epsilon}$-approximation 
unless $\mathrm{NP}=\mathrm{ZPP}$.
The integrality gap of a natural LP-relaxation for $k$-DST is
$\Omega(k/\log k)$ which holds even for a special case of 
{\em connectivity-augmentation} where we wish to increase 
a connectivity of a graph by one.
All the lower bound results are based on the same construction
which are directed acyclic graphs (DAGs) with diameter $5$, i.e.,
any path in an input graph has length (number of edges) at most $5$
(we may also say that it has {\em $6$ layers}).
Even for a very simple variant of $k$-DST, namely $(1,2)$-DST, 
where we have two terminals,
one terminal requires one path from the root and 
another terminal requires $2$ edge-disjoint paths, 
it was not known whether the problem is NP-hard or 
polynomial-time solvable.
To date, the only known positive result for $k$-DST is 
an $O(n^{kh})$-time (exact) algorithm  for $k$-DST on DAGs
\cite{CheriyanLNV14}, which thus runs in polynomial-time when $kh$ is  
constant,
and a folk-lore $|T|$-approximation algorithm, which can
be obtained by computing min-cost $k$-flow for $|T|$ times, 
one from the root $r$ to each terminal $t$
and then returning the union as a solution.
We emphasize that there was no known non-trivial approximation
algorithm even when an input graph is ``directed acyclic'' and 
has ``constant number of layers''.
Also, in contrast to DST, in which an $O(2^{|T|}\poly(n))$-time
(exact) algorithm exists for general graphs,
it is not known whether $k$-DST for $k=2$ and $|T|=2$ is
polynomial-time solvable if an input graph is not acyclic.
This leaves challenging questions whether ones can design a non-trivial 
approximation algorithm for $k$-DST on DAGs with at most $D$ layers,
and whether ones can design a non-trivial approximation algorithm
when an input graph is not acyclic.

In this paper, we make a progress toward developing a non-trivial
approximation algorithm for $k$-DST.
We present the first ``non-trivial'' approximation algorithm for
$k$-DST on DAGs with $D$ layers that achieves an approximation ratio
of $O(D\cdot k^{D-1}\cdot\log n)$.
Our algorithm can be extended to a special case of $k$-DST 
on ``general graphs'' where an instance has 
a {\em $D$-shallow} optimal solution, i.e., 
there exist $k$ edge-disjoint $r,t$-paths, each of length at most $D$,
for every terminal $t\in T$.
Consequently, as our algorithm works for a general graph,
we obtain an $O(D\cdot k^{D-1}\cdot\log n)$-approximation
algorithm for a $D$-shallow instance of 
the {\em $k$ edge-connected directed Steiner subgraph} problem, where  
we wish to connect every pair of terminals by $k$ edge-disjoint paths,
i.e., the set of terminal $T$ is required to be $k$-edge connected in
the solution subgraph (there is no root vertex in this problem).

Our algorithm is almost tight when $k$ and $D$ are constants
because the case that $k=1$ and $D=3$ is essentially 
the {\em set cover} problem, which is NP-hard to approximate to within
a factor of $O(\log h)$ \cite{LundY94,Feige98},
and our algorithm achieves the same approximation ratio.
We also remark that the $k^{1/4-\epsilon}$-hardness instance of
$k$-DST is a DAG with $6$ layers, and our algorithm gives 
$O(k^5\log n)$-approximation for this special case.
For $k=1$, we obtain a slightly better bound of $O(D\log h)$,
thus giving an LP-based $O(\log^3 h)$-approximation algorithm for DST
as a by product. 

The key idea of our algorithm is to formulate an LP-relaxation
with a special property that a fractional solution
can be embedded into a tree instance of 
the {\em group Steiner tree} problem (GST).
Thus, we can apply the GKR Rounding algorithm in \cite{GargKR00}
for GST on trees to round the fractional solution.
However, embedding of an LP-solution to a tree instance of GST does
not preserve connectivity.
Also, it does not lead to a reduction from $k$-DST to 
the $k$ edge-connected variant of GST, namely, $k$-GST. 
Hence, our algorithm is, although simple, not straightforward. 

\subsection{Our Results}

Our main result is an $O(D \cdot k^{D-1}\cdot \log n)$-approximation
algorithm for $k$-DST on a $D$-shallow instance,
which includes a special case that an input graph 
is directed acyclic and has at most $D$ layers. 

\begin{theorem}
\label{thm:approx-k-dst}
Consider the $k$ edge-connected directed Steiner tree problem.
Suppose an input instance has an optimal solution $H^*$ in which,
for every terminal $t\in T$, 
$H^*$ has $k$ edge-disjoint $r,t$-paths 
such that each path has length at most $D$. 
Then there exists an $O(D\cdot k^{D-1}\cdot \log n)$-approximation
algorithm.
In particular, there is an 
$O(D\cdot k^{D-1}\cdot \log n)$-approximation
algorithm for $k$-DST on a directed acyclic graph with $D$ layers.
\end{theorem}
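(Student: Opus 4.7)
The plan is to attack the theorem via an LP-based tree embedding, solving ordinary ($1$-connected) group Steiner tree on the embedded tree, and then boosting the rounding probabilistically to obtain $k$ edge-disjoint paths in the original graph. Concretely, I would (i)~set up a flow LP for $k$-DST restricted to paths of length $\le D$, (ii)~build a layered tree whose root-to-leaf paths enumerate all length-$\le D$ walks of $G$, (iii)~transfer the LP solution to a fractional $1$-GST solution on the tree, (iv)~run GKR rounding $\Theta(k^{D-1}\log n)$ times independently, and (v)~argue that the union contains $k$ edge-disjoint $r\to t$ paths in $G$ for every terminal $t$.

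\emph{LP and tree embedding.}
The relaxation has edge variables $x_e$ supporting $k$ units of $r\to t$ flow for every terminal $t$, with objective $\sum_e c_e x_e$. Under the $D$-shallow assumption, flows may be restricted to length-$\le D$ paths without losing feasibility against the assumed optimum $H^*$. The layered tree $\calT$ has nodes $(v,i)$ for $v\in V(G)$ and $0\le i\le D$, rooted at $(r,0)$; an arc $(u,v)\in E(G)$ is copied at every depth as a tree-edge from $(u,i)$ to $(v,i+1)$. For each terminal $t$ the \emph{group} $g_t$ is the set of tree nodes $(t,i)$. Flow-decomposing $x$ and routing each fractional $r\to t$ path into the corresponding tree-path yields a fractional $y$ on $\calT$ which, after scaling by $1/k$, is feasible for the GST instance $(\calT,\{g_t\})$ at cost $\OPT_{\mathrm{LP}}/k$.

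\emph{Rounding and boosting.}
GKR rounding on a depth-$D$ tree returns an integral $1$-GST solution of cost $O(D\log h)$ times the fractional cost, supplying \emph{one} root-to-leaf tree-path, and hence one walk in $G$, per terminal. To obtain $k$ edge-disjoint $r\to t$ paths in $G$, I would take $N=\Theta(k^{D-1}\log n)$ independent GKR runs and return the union of all picked graph-edges; the expected cost is $O(N\cdot D\log h\cdot\OPT/k)=O(D\cdot k^{D-1}\log n)\cdot\OPT$, matching the claimed ratio.

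\emph{Main obstacle.}
The crux is to prove that the union actually contains $k$ edge-disjoint $r\to t$ paths in $G$ (not just in $\calT$) for every $t\in T$ with high probability. Tree edge-disjointness does not imply graph edge-disjointness, since a single graph-edge has up to $D$ copies in $\calT$ that can be sampled across different GKR runs. I would proceed through Menger's theorem: a failure for a terminal $t$ yields a graph-cut $C\subseteq E(G)$ with $|C|\le k-1$ separating $r$ from $t$. LP feasibility forces $\sum_{e\in C}x_e\ge k$, which transfers to total fractional mass $\ge k$ on the tree-edges whose graph-image lies in $C$. The key estimate to establish is that a single GKR run ``crosses'' every such cut $C$ with probability $\Omega(1/k^{D-1})$: the tree-flow crossing $C$ fans out across the $D$ layers into as many as $k^{D-1}$ qualitatively distinct crossing-patterns, and it suffices to hit \emph{some} pattern in each run. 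Given this single-shot lower bound, a Chernoff/union-bound argument over the polynomially many relevant cuts and the $|T|$ terminals shows that $N=\Theta(k^{D-1}\log n)$ independent rounds produce $k$ edge-disjoint paths everywhere with probability $\ge 1-1/\poly(n)$. Pinning down this ``single-GKR crosses a cut'' probability uniformly over all bad cuts — and thereby justifying the factor $k^{D-1}$ — is where I expect the bulk of the technical work to lie.
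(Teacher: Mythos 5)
Your high-level plan (LP, tree embedding, GKR rounding, independent repetitions, union bound over $(k-1)$-edge cuts) has the same skeleton as the paper's argument, and the Menger-style reduction to ``every $(k-1)$-cut is bypassed by some run'' is exactly right. But there are genuine gaps where the proposal, as written, would not go through.

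\textbf{The tree construction is wrong.} You define $\calT$ on nodes $(v,i)$ with edges $(u,i)\to(v,i+1)$ for each arc $uv$. This is a \emph{layered DAG}, not a tree: a node $(v,i+1)$ has one parent per in-neighbor of $v$ in $G$. GKR rounding requires a tree, so this step does not type-check. The paper instead builds a \emph{suffix tree} of $r$-rooted paths (nodes are paths $p$ of length $\le D$ starting at $r$; $p$ is adjacent to $p+e$), which is genuinely a tree because every node has a unique path from the root. This also means each graph edge has potentially many more than $D$ copies in the tree, which is why controlling the cost blow-up is nontrivial.

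\textbf{The cost vs.\ probability accounting is misplaced and does not multiply out.} The paper does \emph{not} pay $k^{D-1}$ in the number of rounds. It pays $k^{D-2}$ in the \emph{cost} of the embedded fractional solution (Lemma~\ref{lem:cost-kdst-mapping}), and pays only $\Omega(1/D)$ per-round group-hit probability and a union bound over $\le n^{2(k-1)}\cdot n$ bad events, yielding $O(Dk\log n)$ rounds. The total is $O(Dk\log n)\cdot k^{D-2}=O(Dk^{D-1}\log n)$. Your proposal instead posits a per-round ``cross the cut'' probability of $\Omega(1/k^{D-1})$; no mechanism is given for why a single GKR round should succeed with exactly this probability, and the arithmetic in your cost estimate ($N\cdot D\log h\cdot \OPT/k$ with $N=k^{D-1}\log n$) gives $Dk^{D-2}\log h\log n$, with an extra $\log$ factor and one power of $k$ short, not the claimed bound. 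The cleaner form of GKR used in the paper has expected cost per round equal to the fractional cost (not $O(D\log h)$ times it) and per-group success $\Omega(1/D)$; the $\log n$ comes only from boosting over bad events.

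\textbf{The key LP strengthening is missing.} Making the suffix-tree embedding cost-bounded requires the paper's two extra constraints, Subflow-Capacity and Aggregating-$k$-Flow, and in turn the structural lemma that a \emph{minimal} $k$-DST solution has at most $k^{\ell-2}$ rooted paths of length $\ell$ ending at any fixed edge (Lemma~\ref{lem:no-of-paths-k-DST}, built from the indegree-$\le k$ property of minimal solutions). Without these, there is no bound on the total fractional mass placed on the copies of a single graph edge in the tree, so the embedded LP cost is uncontrolled. Your $1/k$-scaling trick does not substitute for this: after deleting a $(k-1)$-edge set, the scaled-down flow to a group can drop to $1/k$ rather than staying $\ge 1$, which cuts the GKR per-group success to $\Omega(1/(Dk))$ and breaks the bound you target. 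The paper avoids this by not scaling: it keeps one full unit of flow surviving any $(k-1)$-edge deletion (Lemma~\ref{lem:feasibility-of-calG-minus-F}), so the per-round probability stays $\Omega(1/D)$.

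In short: right skeleton, but the embedded object must be a genuine tree (suffix tree of rooted paths), the $k^{D-2}$ factor belongs in the fractional cost (justified by the indegree/path-count lemma for minimal solutions and encoded as LP constraints), and the per-round probability should remain $\Omega(1/D)$ by showing the tree LP restricted away from any fixed $(k-1)$-edge set is still a feasible 1-GST fractional solution — not by arguing a tiny $\Omega(1/k^{D-1})$ hitting probability per round.
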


For the case $k=1$, our algorithm yields a slightly better guarantee
of $O(D\log h)$. 
Thus, we have as by product an LP-based approximation algorithm 
for DST.
Applying Zelikovsky's height-reduction theorem
\cite{Zelikovsky97,HelvigRZ01},
this implies an LP-based quasi-polynomial-time
$O(\log^3 h)$-approximation algorithm for DST.
(The algorithm runs in time $O(\poly(n^D)$ and has approximation ratio
$O(h^{1/D}\cdot D^2 \log h)$.)

Theorem~\ref{thm:approx-k-dst} also implies an algorithm of the same 
(asymptotic) approximation ratio for a $D$-shallow instance of 
the $k$ edge-connected directed Steiner subgraph problem, where we
wish to find a subgraph $H$ such that the set of terminal $T$ is
$k$-edge-connected in $H$. 
To see this, we invoke the algorithm in
Theorem~\ref{thm:approx-k-dst} as follows.
Take any terminal $t^*\in T$ as a root vertex of a $k$-DST instance.
Then we apply the algorithm for $k$-DST to find a subgraph $H^{out}$
such that every terminal is $k$ edge-connected from $t^*$.
We apply the algorithm again to find a subgraph $H^{in}$ 
such that every terminal is $k$ edge-connected to $t^*$.
Then the set of terminal $T$ is $k$-edge connected in the graph
$H^{out}\cup H^{in}$ by transitivity of edge-connectivity.
The cost incurred by this algorithm is at most twice that of 
the algorithm in Theorem~\ref{thm:approx-k-dst}.
Thus, we have the following theorem as a corollary of 
Theorem~\ref{thm:approx-k-dst}

\begin{theorem}
\label{thm:approx-k-conn-subgraph}
\label{thm:approx-k-dst}
Consider the $k$ edge-connected directed Steiner subgraph problem.
Suppose an input instance has an optimal solution $H^*$ in which,
for every pair of terminals $s,t\in T$, 
$H^*$ has $k$ edge-disjoint $s,t$-paths 
such that each path has length at most $D$. 
Then there exists an $O(D\cdot k^{D-1}\cdot \log n)$-approximation
algorithm.
\end{theorem}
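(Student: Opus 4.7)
The plan is to derive Theorem~\ref{thm:approx-k-conn-subgraph} as a black-box reduction to the rooted problem solved by Theorem~\ref{thm:approx-k-dst}. First I would pick an arbitrary terminal $t^*\in T$ and use it as a root. I would then run the algorithm of Theorem~\ref{thm:approx-k-dst} twice: once on the input graph $G$ with root $t^*$ and terminal set $T-\{t^*\}$ to produce a subgraph $H^{\mathrm{out}}$ in which $t^*$ is $k$-edge-connected to every terminal, and once on the reverse graph $G^{R}$ (same edge costs, edges flipped) with the same root and terminal set to produce a subgraph $H^{\mathrm{in}}$ whose reversal, viewed back in $G$, contains $k$ edge-disjoint paths from every terminal into $t^*$. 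The output is $H:=H^{\mathrm{out}}\cup H^{\mathrm{in}}$.

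Next I would verify that the shallowness hypothesis transfers to both rooted instances. If $H^*$ is a $D$-shallow optimal solution for the subgraph problem, then for each terminal $t$ the $k$ edge-disjoint $t^*,t$-paths in $H^*$ of length at most $D$ witness that $H^*$ is itself a feasible $D$-shallow solution to the $k$-DST instance $(G,t^*,T)$; symmetrically the $k$ edge-disjoint $t,t^*$-paths of length at most $D$ show $H^*$ is $D$-shallow feasible for the reverse instance. Hence both applications of Theorem~\ref{thm:approx-k-dst} incur cost at most $O(D\cdot k^{D-1}\cdot \log n)\cdot \cost(H^*)$, and by the union bound $\cost(H)\le 2\cdot O(D\cdot k^{D-1}\cdot \log n)\cdot \OPT$, which is within the claimed ratio.

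The only real content is feasibility of $H$, i.e.\ showing that every ordered pair $s,t\in T$ has $k$ edge-disjoint $s,t$-paths in $H$; this ``transitivity of edge-connectivity'' step is where I expect the main subtlety. I would prove it by Menger's theorem: take any edge set $F\subseteq E(H)$ with $|F|<k$; since $s$ is $k$-edge-connected to $t^*$ inside $H^{\mathrm{in}}\subseteq H$, the vertex $s$ still reaches $t^*$ in $H-F$, and since $t^*$ is $k$-edge-connected to $t$ inside $H^{\mathrm{out}}\subseteq H$, the vertex $t^*$ still reaches $t$ in $H-F$; concatenating walks, $s$ reaches $t$ in $H-F$. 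Hence no edge cut of size less than $k$ separates $s$ from $t$ in $H$, so by Menger's theorem $H$ contains $k$ edge-disjoint $s,t$-paths, as required. Combining feasibility with the cost bound from the previous paragraph yields the theorem.
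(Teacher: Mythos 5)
Your proposal matches the paper's own argument: pick an arbitrary terminal $t^*$ as root, run the $k$-DST algorithm once to $k$-connect $t^*$ to all terminals and once (on the reversed graph) to $k$-connect all terminals to $t^*$, then take the union and appeal to transitivity of edge-connectivity, with cost at most twice the rooted bound. The paper states the transitivity step without proof; your Menger-based justification and the check that $D$-shallowness transfers to both rooted instances are correct elaborations of exactly that step.
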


\paragraph{Overview of our algorithm}

The key idea of our algorithm is to embed an LP solution 
for $k$-DST to a standard LP of GST on a tree.
(We emphasize that we embed the LP solution of $k$-DST to that of GST
not $k$-GST.) 
At first glance, a reduction from $k$-DST to GST on trees
is unlikely to exist because any such reduction 
would destroy all the connectivity information.
We show, however, that such tree-embedding exists,
but we have to sacrifice running-time and cost to obtain such
embedding.

The reduction is indeed the same as a folk-lore reduction from DST
to GST on trees. 
That is, we list all rooted-paths (paths that start from the root
vertex) of length at most $D$ in an input graph and form a suffix
tree. 
In the case of DST, if there is an optimal solution which is a tree of
height $D$, then it gives an approximation preserving reduction from
GST to DST which blows up the size (and thus the running time) of the 
instance to $O(n^D)$.  
Unfortunately, for the case of $k$-DST with $k>1$, 
this reduction does not give an equivalent reduction 
from $k$-DST to $k$-GST on trees. 
The reduction is valid in one direction, i.e., any feasible solution
to $k$-DST has a corresponding feasible solution to the tree-instance
of $k$-GST. 
However, the converse is not true as a feasible solution to the
tree-instance of $k$-GST might not give a feasible solution to $k$-DST.
Thus, our reduction is indeed an ``invalid'' reduction from $k$-DST to
a tree instance of ``GST'' (the case $k=1$).

To circumvent this problem, we formulate an LP that provides a
connection between an LP solution on an input $k$-DST instance 
and an LP solution of a tree-instance of GST. 
Thus, we can embed an LP solution to an LP-solution of GST on a 
(very large) tree.
We then round the LP solution using the GKR Rounding algorithm for GST on
trees \cite{GargKR00}.
This algorithm, again, does not give a feasible solution to $k$-DST as
each integral solution we obtain only has ``connectivity one'' 
and thus is only feasible to DST.
We cope with this issue by using a technique 
developed by Chalermsook~et~al. in \cite{ChalermsookGL15}.
Specifically, we sample a sufficiently large number of
DST solutions and show that the union of all these solutions
is feasible to $k$-DST using cut-arguments.

Each step of our algorithm and the proofs are mostly standard,
but ones need to be careful in combining each step.
Otherwise, the resulting graph would not be feasible to $k$-DST.

\medskip\noindent{\bf Organization.}
We provide definitions and notations in Section~\ref{sec:prelim}. 
We start our discussion by presenting a reduction from DST to GST
in Section~\ref{sec:DST-to-GST}.
Then we discuss properties of minimal solutions for $k$-DST
in Section~\ref{sec:props-min-solution}.
We present standard LPs for $k$-DST and GST in
Section~\ref{sec:standard-LPs}
and formulate a stronger LP-relaxation for $k$-DST in
Section~\ref{sec:strong-LP-k-DST}.
Then we proceed to present our algorithm in
Section~\ref{sec:algo-kDST}. 
Finally, we provide some discussions in Section~\ref{sec:conclusion}.

\section{Preliminaries}
\label{sec:prelim}

We use standard graph terminologies. 
We refer to a directed edge $(u,v)$, shortly, by $uv$
(i.e., $u$ and $v$ are head and tail of $uv$, respectively),
and we refer to an undirected edge by $\{u,v\}$.
For a (directed or undirected) graph $G$, we denote by $V(G)$ and $E(G)$
the sets of vertices and edges of $G$, respectively.
If a graph $G$ is associated with edge-costs $\{c_e\}_{e\in E(G)}$, 
then we denote the cost of any subgraph $H\subseteq G$ by
$\cost(H) = \sum_{e\in E(H)}c_e$.
For any path $P$, 
we use {\em length} to mean the number of edges in a path $P$ and
use {\em cost} to mean the total costs of edges in $P$.
%

In the {\em directed Steiner tree} problem (DST), 
we are given a directed graph $G$ with edge-costs $\{c_e\}_{e\in E(G)}$, 
a root vertex $r$ and a set of terminals $T\subseteq V(G)$.
The goal is to find a min-cost subgraph $H\subseteq G$ such that
$H$ has a directed path from the root $r$ to each terminal $t\in T$.
A generalization of DST is the 
{\em $k$ edge-connected directed Steiner tree} problem ($k$-DST).
In $k$-DST, we are given the same input as in DST plus an integer $k$. 
The goal is to find a min-cost subgraph $H$ that has $k$ edge-disjoint
paths from the root $r$ to each terminal $t\in T$.
The {\em $k$ edge-connected directed Steiner subgraph} problem
is a variant of $k$-DST, where there is no root vertex, and
the goal is to find a min-cost subgraph $H$ such that
the set of terminals $T$ is $k$ edge-connected in $H$. 

The problems on undirected graphs that are closely related to 
of DST and $k$-DST are the 
{\em group Steiner tree} problem (GST) and 
the {\em $k$ edge-connected group Steiner tree} problem ($k$-GST).
In GST, we are given an undirected graph $\calG$ with edge-costs
$\{c_e\}_{e\in E(G)}$, a root vertex $r$ and a collection of subset of
vertices $\{\calT_i\}_{i=1}^h$ called groups. The goal is to find a
a min-cost subgraph $\calH$ that connects $r$ to each group
$\calT_i$. In $k$-GST, the input consists of an additional integer $k$,
and the goal is to find a min-cost subgraph $H$ with $k$ edge-disjoint
$r,\calT_i$-paths for every group $T_i$.

Consider an instance of DST (resp., $k$-DST).
We denote by $Q$ the set of all paths in $G$ that start from the root $r$.
The set of paths in $Q$ that end with a particular pattern, say
$\sigma=(v_1,\ldots,v_q)$, is denoted by $Q(\sigma)$. This pattern $\sigma$
can be a vertex $v$, an edge $e$ or a path $\sigma=(v_1,\ldots,v_q)$ in
$G$. For example, $Q(u,v,w)$ consists of paths $P$ of the form
$P=(r,\ldots,u,v,w)$. We say that a path $P$ ends at a vertex $v$
(resp., an edge $e$) if $v$ (resp., $e$) is 
the last vertex (resp., edge) of $P$.

We may consider only paths with particular length, say $D$.
We denote by $Q_D$ the set of paths that start at $r$ and has length
at most $D$. The notation for $Q_D$ is analogous to $Q$, e.g.,
$Q_D(uv)\subseteq Q_D$ is the set of paths in $Q_D$ that end at an
edge $uv$.
A concatenation of a path $p$ with an edge $e$ or a vertex $v$
are denoted by $p+e$ and $p+v$, respectively.
For example, $(u_1,\ldots,u_{\ell})+vw=(u_1,\ldots,u_{\ell},v,w)$.

Given a subset of vertices $S$, 
the set of edges entering $S$ is denoted by 
\[
\delta^-(S) = \{uv \in E: u\in S, v\not\in S\}
\]
The indegree of $S$ is denoted by $\indeg(S) = |\delta^-(S)|$.
Analogously, we use $\delta^+(S)$ and $\outdeg(S)$ for
the set of edges leaving $S$.
For undirected graphs, we simply use the notations $\delta(S)$ and $\deg(S)$.

We say that a feasible solution $H$ to $k$-DST is {\em $D$-shallow}
if, for every terminal $t\in T$, there exists a set of 
$k$ edge-disjoint $r,t$-paths in $H$ such that every path has length at most
$D$.
An instance of $k$-DST that has an optimal $D$-shallow solution
is called a $D$-shallow instance.
We also use the term $D$-shallow analogously for $k$-GST
and the $k$ edge-connected Steiner subgraph problem.
%


To distinguish between the input of $k$-DST (which is a directed graph)
and $k$-GST (which is an undirected graph), we use script fonts, 
e.g., $\calG$, to denote the input of $k$-GST.
Also, we use $\calQ$ to denote the set of all paths from the root $r$
to any vertex $v$ in the graph $\calG$. 
The cost of a set of edges $F$ (or a graph) is defined by a function 
$\cost(F)=\sum_{e\in F}c_e$. 
At each point, we consider only one instance of $k$-DST 
(respectively, $k$-GST).
So, we denote the cost of the optimal solution to $k$-DST by 
$\opt_{kDST}$ (respectively, $\opt_{kGST}$).

\section{Reduction from Directed Steiner Tree to Group Steiner Tree}
\label{sec:DST-to-GST}

In this section, we describe a reduction $\calR$ from DST to GST.
We recall that $Q$ denotes all the $r,v$-paths in a DST instance $G$. 
The reduction is by simply listing paths in the directed graph $G$
as vertices in a tree $\calG=\calR(G)$ and
joining each path $p$ to $p+e$ if $p+e$ is a path in $G$. 
In fact, $\calR(G)$ is a {\em suffix tree} of paths in $Q$.
To be precise,
\begin{align*}
V(\calG) &= \{p : \mbox{$p$ is an $r,v$-path in $G$}\}\\
E(\calG) &= \{\{p,p+e\}: \mbox{both $p$ and $p+e$ are paths in $G$ starting at $r$}\}
\end{align*}

We set the cost of edges of $\calG$ to be $c_{\{p,p+e\}}=c_e$.
Since the root $r$ has no incoming edges in $G$,
$r$ maps to a unique vertex $(r) \in \calG$, and we define $(r)$ as
the root vertex of the GST instance.
We will abuse $r$ to mean both the root of DST and 
its corresponding vertex of GST. 
For each terminal $t_i\in T$, define a group of the GST instance as
\[
\calT_i := Q(t_i) = \{p \subseteq G: \mbox{$p$ is an $r,t_i$-path in $G$}\}
\]

It is easy to see that the reduction $\calR$ produces a tree,
and there is a one-to-one mapping between 
a path in the tree $\calG=\calR(G)$ and a path in
the original graph $G$.
Thus, any tree in $G$ corresponds to a subtree of $\calR(G)$ 
(but not vice versa),
which implies that the reduction $\calR$ is approximation-preserving 
(i.e., $\opt_{DST}=\opt_{GST}$). 
Note, however, that the size of the instance blows up 
from $O(n+m)$ to $O(n^D)$, 
where $D$ is the length of the longest path in $G$.
The reduction holds for general graphs, but
it is approximation-preserving only if the DST instance
is $D$-shallow, i.e., it has an optimal solution $H^*$ 
such that any $r,t_i$-path in $H^*$ has length at most $D$, 
for all terminals $t_i\in T$.
However, Zelikovsky~\cite{Zelikovsky97,HelvigRZ01} showed
that the {\em metric completion} of $G$ always contains a $D$-shallow 
solution with cost at most $D|V(G)|^{1/D}$ of 
an optimal solution to DST. 
(This is now known as Zelikovsky's height reduction theorem.)
Thus, we may list only paths of length at most $D$ 
from the metric completion.
We denote the reduction that lists only paths of length at most $D$ by
$\calR_D$.



\section{Properties of Minimal Solutions to $k$-DST}
\label{sec:props-min-solution}

In this section, we provide structural lemmas which are building
blocks in formulating a strong LP-relaxation for $k$-DST. 
These lemmas characterize properties of 
a minimal solution to $k$-DST.

\begin{lemma}
\label{lem:rv-paths-in-minimal-kDST}
Let $H$ be any minimal solution to $k$-DST.
Then $H$ has at most $k$ edge-disjoint $r,v$-paths,
for any vertex $v\in V(H)$.
\end{lemma}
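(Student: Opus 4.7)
The plan is to argue by contradiction: assume some vertex $v \in V(H)$ has at least $k+1$ edge-disjoint $r,v$-paths in $H$, and derive an edge whose removal preserves feasibility, contradicting minimality. The translation step is Menger's theorem: the hypothesis is equivalent to saying that every $r,v$-separating cut in $H$ has size at least $k+1$, i.e., $|\delta^-_H(S)| \geq k+1$ for every $S$ with $r \notin S$ and $v \in S$. In particular, taking $S = \{v\}$, the in-degree of $v$ in $H$ is at least $k+1$, so I can select any edge $e = xv$ entering $v$.

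Next I would exploit minimality. Since $H - e$ is infeasible to $k$-DST, there exists some terminal $t \in T$ and, by Menger applied to $H - e$, a cut $S$ with $r \notin S$, $t \in S$, and $|\delta^-_{H-e}(S)| \leq k-1$. Because removing the single edge $e$ drops the cut size by at most one, we have $|\delta^-_H(S)| \leq k$; but feasibility of $H$ forces $|\delta^-_H(S)| \geq k$, so equality holds and $e \in \delta^-_H(S)$.

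The final step is to note that since $e = xv$ lies in $\delta^-_H(S)$, the head $v$ must be in $S$ and the tail $x$ must be outside. Thus $S$ is itself an $r,v$-separating cut in $H$ of size exactly $k$, contradicting the assumption that every such cut has size at least $k+1$. This yields the desired contradiction, so no vertex $v$ can carry more than $k$ edge-disjoint $r,v$-paths.

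The one subtle point — and the only place any real choice is made — is that the edge $e$ to be deleted is chosen to enter $v$. This is what forces $v \in S$ in the final step and converts the terminal-cut $S$ into a $v$-cut, closing the argument. Apart from this choice, the proof is a straightforward double application of Menger's theorem together with the standard observation that deleting an edge can decrease any cut by at most one.
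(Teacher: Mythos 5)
Your proof is correct and takes essentially the same approach as the paper's: assume $k+1$ disjoint paths to $v$, delete an edge into $v$, extract a terminal cut $S$ from the resulting infeasibility via Menger, and use the fact that this edge must lie in $\delta^-_H(S)$ (so $v\in S$) to turn $S$ into an $r,v$-cut that is too small. The only cosmetic difference is that you state the contradiction in terms of $|\delta^-_H(S)|=k<k+1$, while the paper states it in $H'=H-e$ as $\indeg_{H'}(S)\leq k-1<k$; these are equivalent since removing the single edge $e\in\delta^-_H(S)$ drops the cut by exactly one.
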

\begin{proof}
Suppose to a contrary that $H$ has $k+1$ edge-disjoint $r,v$-paths,
for some vertex $v\in V(H)$.
Then $v$ must have indegree at least $k+1$ in $H$.
We take one of the $k-1$ edges entering $v$, namely, $uv$.
By minimality of $H$, removing $uv$ results in a graph 
$H'=H\setminus uv$ that has less than 
$k$ edge-disjoint $r,t_i$-paths for some terminal $t_i\in T$.
Thus, by Menger's theorem, 
there must be a subset of vertices $S\subseteq V$
such that $t_i\in S$, $r\in V\setminus S$ and 
$\indeg_{H'}(S) \leq k-1$. 
Observe that we must have $uv$ in $\delta_H^-(S)$ because
$H$ is a feasible solution to $k$-DST, 
which means that $v\in S$.
Since we remove only one edge $uv$ from $H$, 
the graph $H'$ must have $k$ edge-disjoint $r,v$-paths. 
But, this implies that $\indeg_{H'}(S) \geq k$,
a contradiction.
\end{proof}

\begin{lemma}
\label{lem:paths-eq-indeg}
Let $H$ be any minimal solution to $k$-DST.
Any vertex $v\in V(H)$ has indegree exactly $\lambda(v)$, 
where $\lambda(v)$ is the maximum number of 
edge-disjoint $r,v$-paths in $H$.
\end{lemma}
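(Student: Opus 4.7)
My plan is to split $\indeg_H(v)=\lambda(v)$ into two inequalities and focus all the work on $\indeg_H(v)\le\lambda(v)$. The reverse direction is essentially free: any family of $\lambda(v)$ edge-disjoint $r,v$-paths in $H$ uses $\lambda(v)$ distinct edges entering $v$, so $\indeg_H(v)\ge\lambda(v)$ without any argument. For the upper bound I would reuse the cut-swapping style from the proof of Lemma~\ref{lem:rv-paths-in-minimal-kDST}, except that I would intersect a minimum $r,v$-cut with a minimum $r,t_j$-cut that certifies infeasibility after the deletion of one carefully chosen edge.

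Concretely, assume for contradiction that $\indeg_H(v)>\lambda(v)$. By Menger's theorem, fix an $r,v$-cut $S$ in $H$ of minimum size ($v\in S$, $r\notin S$, $\indeg_H(S)=\lambda(v)$), and, crucially, choose $S$ to be inclusion-minimal among all such minimum cuts. Since $\indeg_H(S)=\lambda(v)<\indeg_H(v)$ and every incoming edge of $v$ has its head $v\in S$, at least one incoming edge $uv$ of $H$ must have its tail $u$ inside $S$; the remaining incoming edges of $v$ (with tails outside $S$) already fill up all $\lambda(v)$ slots in $\delta_H^-(S)$. This edge $uv$ is my candidate for deletion.

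Using minimality of $H$, removing $uv$ destroys feasibility for some terminal $t_j$, so by Menger there is a cut $S''$ with $r\notin S''$, $t_j\in S''$, $\indeg_H(S'')=k$, and $uv\in\delta_H^-(S'')$, which forces $u\notin S''$ and $v\in S''$. I would then compare $S$ and $S''$ using submodularity of the indegree function: since $v\in S\cap S''$ and $r\notin S\cap S''$, the set $S\cap S''$ is an $r,v$-cut and satisfies $\indeg_H(S\cap S'')\ge\lambda(v)$; since $t_j\in S\cup S''$ and $r\notin S\cup S''$, the set $S\cup S''$ is an $r,t_j$-cut and satisfies $\indeg_H(S\cup S'')\ge k$. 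The submodular inequality $\indeg_H(S)+\indeg_H(S'')\ge\indeg_H(S\cap S'')+\indeg_H(S\cup S'')$ is then tight, so $S\cap S''$ is itself a minimum $r,v$-cut. But $u\in S\setminus S''$, so $S\cap S''\subsetneq S$, contradicting the inclusion-minimality of $S$.

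The main obstacle is exactly this last move: submodularity alone delivers only a numerical equality, not a set-theoretic contradiction, so I must pinpoint an element witnessing $S\cap S''\subsetneq S$. Choosing $S$ to be merely any minimum cut (or just one of smallest cardinality) is not enough; the argument goes through precisely because (i) $S$ is inclusion-minimal among minimum $r,v$-cuts and (ii) the deleted edge has its tail inside $S$, which together make $u$ the needed separator. Once this opening choice is locked in, every other step is routine Menger-plus-submodularity bookkeeping.
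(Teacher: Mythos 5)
Your proof is essentially the paper's argument: pick a minimal minimum $r,v$-cut $S$, locate an edge $uv$ entering $v$ with both endpoints inside $S$, delete it and use minimality of $H$ to produce a tight $r,t_j$-cut $S''$ with $uv\in\delta^-(S'')$, then uncross via submodularity to obtain $S\cap S''$ as a strictly smaller minimum $r,v$-cut, a contradiction. One small correction to your closing commentary: choosing $S$ of smallest cardinality among minimum $r,v$-cuts works just as well as inclusion-minimality, since $u\in S$ and $u\notin S''$ already force $|S\cap S''|<|S|$, and the paper indeed only asks for ``a minimum such set.''
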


\begin{proof}
The proof follows a standard uncrossing argument.
Assume a contradiction that 
$v$ has indegree at least $\lambda(v)+1$ in $H$.
By Menger's theorem, there is a subset of 
vertices $U\subseteq V$ such that 
$\indeg_H(U) = \lambda(v)$, $v\in U$ and $r\not\in U$ 
that separates $v$ from $r$.
We assume that $U$ is a minimum such set. 
Since $\indeg_H(v) > \lambda(v)$, there is an edge
$uv\in E(H)$ that is not contained in $\delta^-_H(U)$,
i.e., $u,v\in U$. 

By minimality of $H$, removing $uv$ results in 
the graph $H'=H\setminus uv$ such that $H'$ has less than
$k$ edge-disjoint $r,t_i$-path for some terminal $t_i\in T$.
Thus, by Menger's theorem, 
there is a subset of vertices $W$
such that $t_i\in W$, $r\not\in W$, $uv\in\delta_H^-(W)$ 
and $\indeg_H(W)=k$.
(The latter is because $H$ is a feasible solution to $k$-DST.)

Now we apply an uncrossing argument to $U$ and $W$.
By submodularity of $\indeg_H$, we have
\[
\indeg_H(U) + \indeg_H(W) 
  \geq \deg_H(U\cap W) + \deg_H(U\cup W)
\]
Observe that $v\in U\cap W$, $t\in U\cup W$
and $r\not\in S\cup S'$. 
So, by the edge-connectivity of $v$ and $t$,  
\begin{equation}
\label{eq:submod}
\indeg_H(U\cap W) \geq \lambda(v) 
  \quad \mbox{and} \quad
\indeg_H(U\cup W) \geq k
\end{equation}
The sum of the left-hand side of Eq~\eqref{eq:submod} is 
\[
\indeg_H(U) + \indeg_H(W)  = k + \lambda(v).
\]
So, we conclude that 
\[
\indeg_H(U\cap W) = \lambda(v)
  \quad \mbox{and} \quad
\indeg_H(U\cup W) = k
\]
Consequently, we have the set $U' = U\cap W$ such that 
$\indeg_H(U')=\lambda(v)$, $v\in U'$ and $r\not\in U'$ that
separates $v$ from $r$.
Since $u\not\in W$, we know that $U'$ is strictly smaller than $U$.
This contradicts to the minimality of $U$.
\end{proof}

The following is a corollary of Lemma~\ref{lem:rv-paths-in-minimal-kDST}
and Lemma~\ref{lem:paths-eq-indeg}
\begin{corollary}
\label{cor:kdst-maxdeg-k}
Let $H$ be a minimal solution to $k$-DST.
Then any vertex $v\in V(H)$ has indegree at most $k$.
\end{corollary}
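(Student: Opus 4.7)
The plan is very short because the corollary is an immediate consequence of the two preceding lemmas, so the ``proof'' is really just the act of composing them. Let $H$ be a minimal solution to $k$-DST, fix any $v\in V(H)$, and let $\lambda(v)$ denote the maximum number of edge-disjoint $r,v$-paths in $H$.

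First I would invoke Lemma~\ref{lem:rv-paths-in-minimal-kDST} to conclude that $\lambda(v)\le k$; this is the content of the uncrossing/removal argument already established there, so no new work is needed. Next I would invoke Lemma~\ref{lem:paths-eq-indeg}, which tells us that in a minimal solution the indegree of every vertex is \emph{exactly} equal to $\lambda(v)$ (the lemma is stated for all $v\in V(H)$, so it applies to our chosen $v$). Chaining these two facts gives
\[
\indeg_H(v)\;=\;\lambda(v)\;\le\;k,
\]
which is precisely the claim.

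There is no real obstacle here: the work has already been done inside the two lemmas, and the corollary simply records their conjunction. The only thing to be careful about is that both lemmas require $H$ to be minimal, and that the notion of $\lambda(v)$ used in Lemma~\ref{lem:paths-eq-indeg} matches the one implicit in Lemma~\ref{lem:rv-paths-in-minimal-kDST} (both refer to the maximum number of edge-disjoint $r,v$-paths in the current subgraph $H$, so they agree). Hence no additional case analysis or uncrossing is needed beyond what is already in the two lemmas.
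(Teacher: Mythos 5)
Your proposal is correct and matches the paper exactly: the paper states the corollary as an immediate consequence of Lemma~\ref{lem:rv-paths-in-minimal-kDST} (which gives $\lambda(v)\le k$) and Lemma~\ref{lem:paths-eq-indeg} (which gives $\indeg_H(v)=\lambda(v)$), and you chain them in the same way.
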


The next lemma follows from Corollary~\ref{cor:kdst-maxdeg-k}.

\begin{lemma}
\label{lem:no-of-paths-k-DST}
Consider any minimal solution $H$ to $k$-DST
(which is a simple graph).
For any edge $e\in E(H)$ and $\ell\geq 2$, 
there are at most $k^{\ell-2}$ paths in $H$ with length at most $\ell$
that start at the root $r$ and ends at $e$.
That is, 
$|Q^H_{\ell}(e)| \leq k^{\ell-2}$ for all $e\in E(H)$,
where $Q^H_{\ell}(e)$ is 
the set of $r,v$-paths of length $\ell$ in $H$. 
\end{lemma}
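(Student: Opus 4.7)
The plan is to reduce the statement to a straightforward induction on path length using the indegree bound from Corollary~\ref{cor:kdst-maxdeg-k}. First, I would observe the basic bijection: a path in $Q^H_\ell(e)$ ending at edge $e=uv$ has the form $(r,\ldots,u,v)$, and deleting the last edge gives a bijection with $r,u$-paths in $H$ of length at most $\ell-1$. So if $N(v,s)$ denotes the number of $r,v$-paths of length at most $s$ in $H$, then $|Q^H_\ell(e)| = N(u,\ell-1)$, and it suffices to prove $N(v,s) \leq k^{s-1}$ for every vertex $v$ and every $s\geq 1$.

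The induction on $s$ would run as follows. For the base case $s=1$, an $r,v$-path of length at most $1$ is either the trivial path (only if $v=r$) or the single edge $(r,v)$; since $H$ is simple and $r$ has no incoming edges, $N(v,1)\leq 1 = k^0$. For the inductive step, I split on whether $v=r$ or not. If $v=r$, then $\indeg_H(r)=0$, so the only $r,r$-path is the trivial one and $N(r,s)=1$. If $v\neq r$, then every $r,v$-path of length at most $s$ decomposes uniquely as an $r,w$-path of length at most $s-1$ followed by an edge $wv\in E(H)$, giving
\[
N(v,s) \;=\; \sum_{wv\in E(H)} N(w,s-1) \;\leq\; \indeg_H(v)\cdot \max_{w} N(w,s-1) \;\leq\; k\cdot k^{s-2} \;=\; k^{s-1},
\]
where the indegree bound uses Corollary~\ref{cor:kdst-maxdeg-k} and the second inequality uses the inductive hypothesis. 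Setting $s=\ell-1$ with $\ell\geq 2$ then yields $|Q^H_\ell(e)| = N(u,\ell-1) \leq k^{\ell-2}$, as desired.

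There isn't really a substantive obstacle here: both the decomposition and the inductive bookkeeping are routine once Corollary~\ref{cor:kdst-maxdeg-k} is in hand. The only mild care point is the boundary behavior when $u=r$ (handled by the fact that $r$ has no incoming edges, so $N(r,\cdot)=1$, which fits the bound $k^{\ell-2}\geq 1$ for $\ell\geq 2$) and making sure the base case is stated for $s=1$ rather than $s=0$, so that the exponent $s-1$ is nonnegative.
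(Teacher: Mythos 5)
Your proof is correct and follows essentially the same route as the paper: both arguments induct on path length, peel off the last edge, and invoke the indegree bound $\indeg_H(v)\leq k$ from Corollary~\ref{cor:kdst-maxdeg-k}. The only cosmetic difference is that you recast $|Q^H_\ell(uv)|$ as $N(u,\ell-1)$ and induct on that quantity, and you are slightly more explicit about the boundary case $u=r$ (where $\indeg_H(r)=0$), which the paper glosses over.
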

\begin{proof}
We prove by induction.
The base case $\ell=2$ is trivial because any rooted path of
length at most $2$ cannot have a common edge.

Assume, inductively, that $|Q^H_{\ell-1}(e)| \leq k^{\ell-3}$ 
for some $\ell \geq 3$. 
Consider any edge $vw\in E(H)$.
By Corollary~\ref{cor:kdst-maxdeg-k}, $v$ has indegree at most $k$.
Thus, there are at most $k$ edges entering $v$, namely,
$u_1v,\ldots,u_dv$, where $d=\indeg(v)$. 
By the induction hypothesis, each edge is the last edge of at most
$k^{\ell-3}$ paths in $Q^H_{\ell-1}$. 
Thus, we have at most $d \cdot k^{\ell-3} \leq k^{\ell-2}$ paths that
end at $uv$. That is,
\[
|Q^H_{\ell}(vw)| 
  \leq \sum_{i=1}^d|Q^H_{\ell-1}(u_dv)|
  \leq \sum_{i=1}^dk^{\ell-3}
  = d \cdot k^{\ell-3}
  \leq k^{\ell-2}.
\]
\end{proof}


\section{Standard LPs for $k$-DST and GST}
\label{sec:standard-LPs}

In this section, we describe standard LPs for $k$-DST and GST. 
Each LP consists of two sets of variables, 
a variable $x_e$ on each edge $e$ and 
a variable $f^i_p$ on each path $p$ and a terminal $t_i$.
The variable $x_e$ indicates whether we choose an edge $e$
in a solution.
The variable $f^i_p$ is a flow-variable on each path and 
thus can be written in a compact form 
using a standard flow formulation.

\[
\mbox{LP-k-DST}\left\{
\begin{array}{lrll}
  \min & \sum_{e\in E(G)}c_ex_e\\
  \mbox{s.t.}
    & \sum_{p\in Q(t_i):e\in E(p)}f^i_p  &\leq x_e  & \forall e\in E(G), 
                                     \forall t_i\in T\\
    & \sum_{p\in Q(t_i)}f^i_p &\geq k & \forall t_i\in T\\
    & x_e                &\leq 1   & \forall e\in E(G)\\
    & x_e                &\geq 0   & \forall e\in E(G)\\
    & f^i_p              &\geq 0   & \forall p\in Q(t_i), 
                                     \forall t_i\in T.
\end{array}
\right. 
\]

The standard LP for GST is similar to LP-k-DST.

\[
\mbox{LP-GST}\left\{
\begin{array}{lrll}
  \min & \sum_{e\in E(\calG)}c_ex_e\\
  \mbox{s.t.}
    & \sum_{v\in \calT_i}\sum_{p\in \calQ(v):e\in E(p)}f^i_p &\leq x_e  
                                   & \forall e\in E(\calG), 
                                     \forall i=1,2,\ldots,h\\
    & \sum_{v\in \calT_i}\sum_{p\in \calQ(v)}f^i_p &\geq 1    
                                   & \forall i=1,2,\ldots,h\\
    & x_e                &\leq 1   & \forall e\in E(\calG)\\
    & x_e                &\geq 0   & \forall e\in E(\calG)\\
    & f^i_p              &\geq 0   & \forall p\in \calQ,
                                     \forall i=1,2,\ldots,h
\end{array}
\right. 
\]

\section{A Strong LP-relaxation for for $k$-DST}
\label{sec:strong-LP-k-DST}

In this section, we formulate a strong LP-relaxation for $k$-DST that
allows us to embed a fractional solution into an LP solution of LP-GST
on a tree. 

\[
\mbox{LP-k-DST*}\left\{
\begin{array}{lrllr}
  \min & \sum_{e\in E}c_ex_e\\
  \mbox{s.t.}
    & \sum_{p\in Q(t_i):e\in E(p)}f^i_p  &\leq x_e  & \forall e\in E(G), 
                                     \forall t_i\in T\\
    & \sum_{p\in Q(t_i)}f^i_p &\geq k & \forall t_i\in T\\
    & \sum_{p\in Q(t_i):q\subseteq p}f^i_p  &\leq y_q  & \forall q\in Q, 
                                     \forall t_i\in T
                        &\mbox{(Subflow Capacity)}\\
    & \sum_{p\in Q_{\ell}(e)}y_p &\leq \max\{1,k^{\ell-2}\} \cdot x_e
                        & \forall e\in E, \forall \ell \geq 1
                        & \mbox{(Aggregating $k$-Flow)}\\
    & x_e                &\leq 1   & \forall e\in E(G)\\
    & x_e                &\geq 0   & \forall e\in E(G)\\
    & f^i_p              &\geq 0   & \forall p\in Q(t_i), 
                                     \forall t_i\in T\\
    & y_p                &\geq 0   & \forall p\in Q 
\end{array}
\right. 
\]

For $D$-shallow instances of $k$-DST, we replace $Q$ by $Q_D$
to restrict length of paths to be at most $D$. 
The next lemma shows that LP-k-DST* is an LP-relaxation for $k$-DST.

\begin{lemma} \label{lem:valid-LP-k-DST*}
LP-k-DST* is an LP-relaxation for $k$-DST.
Moreover, replacing $Q$ by $Q_D$ gives an LP-relaxation for $k$-DST on
$D$-shallow instances.
\end{lemma}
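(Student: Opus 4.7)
The plan is, given any feasible $k$-DST solution $H$ (for the second part, any $D$-shallow feasible solution), to exhibit a feasible solution to LP-k-DST* of cost exactly $\cost(H)$; specializing to an optimal $H$ then gives $\mathrm{LP}\le \opt_{kDST}$. I may assume that $H$ is $k$-DST-minimal: deleting redundant edges only drops the cost, and in the $D$-shallow case an optimal $D$-shallow $H$ is automatically $k$-DST-minimal (any strictly smaller $k$-DST-feasible subgraph would be a cheaper $k$-DST solution under positive edge costs, and zero-cost redundant edges can be stripped a priori). By Corollary~\ref{cor:kdst-maxdeg-k} this guarantees $\indeg_H(v)\le k$ at every vertex, which is what unlocks Lemma~\ref{lem:no-of-paths-k-DST}.

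I then set $x_e=1$ for $e\in E(H)$ and $x_e=0$ otherwise; for each terminal $t_i$ I fix $k$ edge-disjoint $r,t_i$-paths $p_1^i,\ldots,p_k^i$ in $H$ (each of length at most $D$ in the $D$-shallow case, which exists by definition) and set $f_{p_j^i}^{\,i}=1$, with all other flow variables zero; finally I put $y_{(r)}=k$ and, for a path $q$ of length at least $1$, $y_q=1$ if $q$ is a prefix of some chosen $p_j^i$ and $y_q=0$ otherwise. The objective evaluates to $\sum_{e\in E(H)} c_e=\cost(H)$. The ordinary flow-capacity and demand constraints are immediate from the choice of $f$. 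For Subflow Capacity at $(q,t_i)$, the left-hand side equals $|\{j:q\subseteq p_j^i\}|$: when $|q|\ge 1$ the edge-disjointness of the $p_j^i$ forces this count to be $0$ or $1$, which is dominated by the corresponding $y_q$; and when $q=(r)$ all $k$ chosen paths contribute, exactly matching $y_{(r)}=k$.

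The crux is the Aggregating $k$-Flow constraint: fix an edge $e$ and a level $\ell\ge 1$. Since $y_p>0$ only if $p$ is an $r,v$-path appearing in $H$, the left-hand side is supported on $Q^H_\ell(e)$, and so
\[
\sum_{p\in Q_\ell(e)} y_p \;\le\; |Q^H_\ell(e)| \;\le\; \max\{1,k^{\ell-2}\},
\]
by Lemma~\ref{lem:no-of-paths-k-DST} for $\ell\ge 2$ and trivially for $\ell=1$ (at most one length-$1$ path can end at $e$). This matches $\max\{1,k^{\ell-2}\}\cdot x_e$ when $e\in E(H)$, while both sides vanish when $e\notin E(H)$. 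The second assertion requires no new work: with $Q$ replaced by $Q_D$, every nonzero $f$- and $y$-variable is supported on $Q_D$ because the chosen paths have length at most $D$, and the Aggregating constraint is examined only for $\ell\le D$, where the display above holds verbatim. The only mild obstacle in the whole argument is securing the $k$-DST-minimality hypothesis in the $D$-shallow case so that Lemma~\ref{lem:no-of-paths-k-DST} applies; this is exactly what the optimality-plus-positive-costs reduction takes care of.
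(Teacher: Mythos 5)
Your proof is correct and follows the same high-level plan as the paper: take a minimal feasible ($D$-shallow) $H$, set $x=\mathbf{1}_{E(H)}$, verify the new constraints via Lemma~\ref{lem:no-of-paths-k-DST} (hence Corollary~\ref{cor:kdst-maxdeg-k}), and conclude $\mathrm{LP}\le\cost(H)$. But your choice of $(f,y)$ is genuinely different from the paper's, and in fact it is the correct one. The paper sets $f^i_p=1$ for \emph{every} $r,t_i$-path $p\subseteq H$ and $y_p=1$ for \emph{every} rooted path $p\subseteq H$. With that choice the basic flow-capacity constraint $\sum_{p\in Q(t_i):e\in E(p)}f^i_p\le x_e\le 1$ is violated whenever two $r,t_i$-paths of $H$ share an edge (which does happen in minimal $k$-DST solutions, e.g.\ $H=\{ra,rb,ac,bc,ct_1,at_1,ct_2,bt_2\}$ for $k=2$, where the edge $ct_1$ lies on two $r,t_1$-paths); likewise Subflow-Capacity is violated whenever two such paths share a prefix $q$, and taking $q=(r)$ even forces $y_{(r)}\ge k$ while the paper's rule gives $y_{(r)}=1$. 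Your choice — fix $k$ edge-disjoint $r,t_i$-paths per terminal, set $f$ on exactly those, set $y_q=1$ on their nontrivial prefixes and $y_{(r)}=k$ — produces a genuinely feasible point, and the Aggregating $k$-Flow bound still follows from Lemma~\ref{lem:no-of-paths-k-DST} because every nonzero $y_q$ is supported on rooted paths of $H$. In short, your argument is a correct repair of the paper's sketch rather than a mere restatement of it. One small wording caveat: your justification that $H$ may be assumed minimal is fine provided ``optimal $D$-shallow $H$'' is read as ``the globally optimal $k$-DST solution $H^*$, which by hypothesis is $D$-shallow'' (as in Theorem~\ref{thm:approx-k-dst}); if instead one optimized only over $D$-shallow solutions, the positive-cost reasoning would not by itself yield $k$-DST-minimality, since deleting an edge might preserve feasibility while breaking $D$-shallowness.
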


\begin{proof}
LP-k-DST* is, in fact, obtained from LP-k-DST (which is a standard LP)
by adding a new variable $y_p$ and two constraints.
\begin{itemize}
\item[(1)] {\bf Subflow-Capacity:}
  $\displaystyle\sum_{p\in Q(t_i):q\subseteq p}f^i_p \leq y_q,
     \forall q\in Q, \forall t_i\in T$.
\item[(2)]{\bf Aggregating $k$-Flow:}
  $\displaystyle\sum_{p\in Q_{\ell}(e)}y_p 
     \leq \max\{1,k^{\ell-2}\}\cdot x_e,  
     \forall e\in E, \forall \ell \geq 1$.
\end{itemize}

To show that these two constraints are valid for $k$-DST,
we take a minimal feasible ($D$-shallow) solution
$H$ of $k$-DST.
We define a solution $(x,f,y)$ to LP-k-DST as below.
\[
\begin{array}{lcl}
  \begin{array}{rl}
    x_e &= \left\{\begin{array}{rl}
              1 & \mbox{if $e\in E(H)$}\\
              0 & \mbox{otherwise}
           \end{array}\right.
  \end{array}
 & \qquad &
  \begin{array}{rl}
    y_p &= \left\{\begin{array}{rl}
              1 & \mbox{if $p\subseteq H \land p\in Q$}\\
              0 & \mbox{otherwise}
           \end{array}\right.
  \end{array}\\
  \begin{array}{rl}
    f^i_p &= \left\{\begin{array}{rl}
              1 & \mbox{if $p\subseteq H \land p\in Q(t_i)$}\\
              0 & \mbox{otherwise}
           \end{array}\right.
  \end{array}\\
\end{array}
\]
By construction, $f^i_p = 1$ implies that $y_p=1$.
Thus, $(x,f,y)$ satisfies the Subflow-Capacity constraint.
By minimality of $H$, Corollary~\ref{lem:no-of-paths-k-DST} 
implies that even if we list all the paths of length $\ell\geq 2$ in
$H$, at most $k^{\ell-2}$ of them end at the same edge, and we know
that rooted paths of length one share no edge 
(given that $H$ is a simple graph).
Thus, $(x,f,y)$ satisfies the Aggregating $k$-Flow constraint.
Consequently, these two constraints are valid for $k$-DST.

On the other hand, any integral solution that is not feasible to
$k$-DST could not satisfy the constraints of LP-k-DST* simply
because LP-k-DST* contains the constraints of LP-k-DST, which is a
standard LP for $k$-DST.
Thus, LP-k-DST* is an LP-relaxation for $k$-DST.

The proof for the case of $D$-shallow instances is the same as above  
except that we take $H$ as a minimal $D$-shallow solution
and replace $Q$ by $Q_D$. 
\end{proof}

\section{An Approximation Algorithm for $k$-DST}
\label{sec:algo-kDST}

In this section, we present an approximation algorithm
for $k$-DST on a $D$-shallow instance. 
Our algorithm is simple.
We solve LP-k-DST* on an input graph $G$ 
and then embed an optimal fractional solution $(x,f,y)$ to 
an LP-solution $(\hat{x},\hat{f})$ of LP-GST on the tree $\calR(G)$. 
We lose a factor of $O(k^{D-2})$ in this process.
As we now have a tree-embedding of an LP-solution,
we can invoke the GKR Rounding algorithm \cite{GargKR00} to round
an LP-solution on the tree $\calR(G)$.
Our embedding guarantees that any edge-set of size $k-1$ 
in the original graph $G$ never maps to 
an edge-set in the tree $\calG=\calR(G)$ that 
separates $r$ and $\calT_i=Q(t_i)$ in $\calG$. 
So, the rounding algorithm still outputs a feasible solution to GST
with constant probability even if we remove edges in the tree
$\calG$ that correspond to a subset of $k-1$ edges in $G$.
Consequently, we only need to run the algorithm for 
$O(D\cdot k\cdot \log n)$ times
to boost the probability of success so that,
for any subset of $k-1$ edges and any terminal $t_i\in T$,
we have at least one solution that 
contains an $r,t_i$-path using none of these $k-1$ edges.
In other words, the union of all the solutions
satisfies the connectivity requirement.
Our algorithm is described in Algorithm~\ref{algo:kdst}.

\begin{algorithm}
\caption{Algorithm for $k$-DST}
\begin{algorithmic}
\label{algo:kdst}
\STATE Solve LP-k-DST* and obtain an optimal solution $(x,f,y)$.
\STATE Map $(x,f,y)$ to a solution $(\hat{x},\hat{f})$ to LP-GST on $\calG=\calR(G)$.
\FOR{$i=1$ \TO $2Dk \log_2 n$}
   \STATE Run GKR Rounding on $(\hat{x},\hat{f})$ to get a solution $\calZ_i$. 
   \STATE Map $\calZ_i$ back to a subgraph $Z_i$ of $G$.
\ENDFOR
\RETURN $H=\bigcup_iZ_i$ as a solution to $k$-DST.
\end{algorithmic}
\end{algorithm}

We map a solution $(x,f,y)$ of LP-k-DST* on $G$ to 
a solution $(\hat{x},\hat{f})$ of LP-GST on the tree $\calG=\calR(G)$
as below.
Note that there is a one-to-one mapping between a path in $G$ and 
a path in the tree $\calG$.
\[
\begin{array}{rll}
\hat{x}_{\{p,p+e\}}  &:= y_{p+e} & \mbox{ for all $p+e \in Q$}\\
\hat{f}^i_{p}      &:= f^i_p & 
   \mbox{ for all $p \in Q$ and for all $t_i\in T$}
\end{array}
\]

\subsection{Cost Analysis}
\label{sec:cost-analysis}

We show that $\cost(\hat{x},\hat{f}) \leq k^{D-2}\cdot \cost(x,f,y)$.

\begin{lemma}
\label{lem:cost-kdst-mapping}
Consider a solution $(\hat{x},\hat{f})$ to LP-GST, which is mapped from 
a solution $(x,f)$ of LP-k-DST* when 
an input $k$-DST instance is $D$-shallow,
for $D\geq 2$.
The cost of $(\hat{x},\hat{f})$ is at most
$\cost(\hat{x},\hat{f}) \leq k^{D-2}\cdot \cost(x,f)$.
\end{lemma}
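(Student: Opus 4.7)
The plan is to show the cost inequality by rewriting $\cost(\hat x,\hat f)$ directly in terms of the $y$-variables on $G$ and then applying the Aggregating $k$-Flow constraint just once, with $\ell = D$.

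First I would unpack the cost of $(\hat x,\hat f)$ on the tree $\calG = \calR_D(G)$. Every non-root vertex of $\calG$ is a path $p+e \in Q_D$, and it has a unique parent edge $\{p,p+e\}$ whose cost is $c_e$ by construction. Hence the edges of $\calG$ are in bijection with the paths $p+e \in Q_D \setminus \{(r)\}$, and
\[
\cost(\hat x,\hat f) \;=\; \sum_{\{p,p+e\} \in E(\calG)} c_e \, \hat x_{\{p,p+e\}} \;=\; \sum_{p+e \in Q_D} c_e \, y_{p+e}.
\]
Now I would regroup the sum according to the last edge $e$: for a fixed $e \in E(G)$, the paths of the form $p+e$ are exactly the paths in $Q_D$ that end with the edge $e$, i.e. the set $Q_D(e)$. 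This gives
\[
\cost(\hat x,\hat f) \;=\; \sum_{e \in E(G)} c_e \sum_{p \in Q_D(e)} y_{p}.
\]

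Next I would invoke the Aggregating $k$-Flow constraint of LP-k-DST* with the single choice $\ell = D$ (this is where $D$-shallowness is used: no path in $Q_D$ is longer than $D$, so $Q_D(e)$ captures every path ending at $e$ we need to account for). The constraint gives
\[
\sum_{p \in Q_D(e)} y_p \;\leq\; \max\{1, k^{D-2}\}\cdot x_e \;=\; k^{D-2}\, x_e,
\]
where the last equality uses $D \geq 2$. Plugging this into the grouped cost and comparing with $\cost(x,f) = \sum_{e} c_e x_e$ yields the claimed inequality
\[
\cost(\hat x,\hat f) \;\leq\; k^{D-2} \sum_{e \in E(G)} c_e x_e \;=\; k^{D-2} \cdot \cost(x,f).
\]

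The calculation is essentially mechanical; the only place where one has to be a bit careful is the bookkeeping of the bijection between tree edges and paths of $Q_D$ and the verification that using $\ell = D$ is sufficient (one must not naively sum the Aggregating $k$-Flow inequality over all $\ell$, which would incur an extra factor of $D$ or worse). Since the constraint for $\ell = D$ already dominates all paths of shorter length ending at $e$ (as $Q_\ell(e) \subseteq Q_D(e)$ for $\ell \leq D$), a single application suffices and the loss is exactly the $k^{D-2}$ factor appearing in the statement.
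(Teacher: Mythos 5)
Your proof is correct and follows essentially the same route as the paper: rewrite $\cost(\hat x,\hat f)$ by regrouping tree edges according to their underlying edge $e\in E(G)$, then apply the Aggregating $k$-Flow constraint once at $\ell=D$ to obtain the $k^{D-2}$ factor. In fact your version is notationally cleaner, since the paper's proof text writes $f_{p+e}$ and $\sum_{p\in Q_\ell(e)}f_p$ where it plainly means the $y$-variables (per the definition $\hat x_{\{p,p+e\}}=y_{p+e}$ and the stated Aggregating $k$-Flow constraint on $y$); your use of $y_p$ throughout matches the intended argument.
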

\begin{proof}
By the constraint 
$\sum_{p\in Q_{\ell}(e)}f_p \leq \max\{1,k^{\ell-2}\}\cdot x_e$,
we have that
\[
\begin{array}{rlll}
\cost(\hat{x},\hat{f}) 
  &= \sum_{e'\in E(\calG)}c_{e'}\hat{x}_{e'}
  &= \sum_{e\in E(G)}\;\sum_{\{p,p+e\}\in E(\calG)}c_e\hat{x}_{\{p,p+e\}}\\
  &= \sum_{e\in E(G)}\;\sum_{p+e\in Q}c_e f_{p+e}
  &= \sum_{e\in E(G)}\;\sum_{p\in Q(e)}c_e f_p\\
  &= \sum_{e\in E(G)}\left(c_e\cdot\sum_{p\in Q(e)}f_p\right)
  &\leq \sum_{e\in E(G)}c_e\cdot k^{D-2}\cdot x_e\\
  &= k^{D-2}\cdot\cost(x,f).
\end{array}
\]
\end{proof}

It can be seen from Algorithm~\ref{algo:kdst} and 
Lemma~\ref{lem:cost-kdst-mapping} that 
the algorithm outputs a solution $H$ with cost 
at most $O(D k^{D-1}\log n)\cdot\cost(x,f)$.
Thus, $H$ is an $O(D k^{D-1}\log n)$-approximate solution.
It remains to show that $H$ is feasible to $k$-DST.

\subsection{Feasibility Analysis}
\label{sec:feasibility-analysis}

Now we show that $H$ is feasible to $k$-DST 
with at least constant probability.
To be formal, consider any subset $F\subseteq E(G)$ of $k-1$ edges.
We map $F$ to their corresponding edges $\calF$ in the tree $\calG$.
Thus, $\calF := \{\{P,P+e\}: P+e\in Q \land e\in F\}$.

Observe that no vertices in $\calG\setminus\calF$ correspond
to paths that contain an edge in $F$.
Thus, we can define an LP solution $(y^F,z^F)$ for LP-GST
on the graph $\calG \setminus \calF$ as follows.
\[
\begin{array}{rlcrl}
y^F_e &= \left\{\begin{array}{ll}
           \hat{x}_e & \mbox{if $e\not\in\calF$}\\
           0    & \mbox{otherwise}
        \end{array}\right.
& \qquad &
z^{F,i}_p &= \left\{\begin{array}{ll}
           \hat{f}^i_p & \mbox{if $E(p)\cap\calF=\emptyset$}\\
           0    & \mbox{otherwise}
        \end{array}\right.
\end{array}
\]

We show that $(y^F,z^F)$ is feasible to LP-GST on $\calG\setminus\calF$.
\begin{lemma}
\label{lem:feasibility-of-calG-minus-F}
For any subset of edges $F\subseteq E(G)$, 
define $(y^F,z^F)$ from $(\hat{x},\hat{f})$ as above. 
Then $(y^F,z^F)$ is feasible to LP-GST on $\calG\setminus\calF$.
\end{lemma}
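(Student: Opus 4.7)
My plan is to verify the two constraint families of LP-GST on $\calG\setminus\calF$ separately: the capacity (edge) constraints, which should follow mechanically from the original feasibility of $(\hat{x},\hat{f})$, and the coverage constraints $\sum_{v\in\calT_i}\sum_{p\in\calQ(v)}z^{F,i}_p\geq 1$, which require the real idea and use the assumption $|F|\leq k-1$.

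For the capacity constraints, I would fix any tree edge $\{P,P+e\}\in E(\calG)\setminus\calF$ and any group $\calT_i$. By the construction of $(y^F,z^F)$, only those flow variables $\hat{f}^i_p$ whose underlying path $p\subseteq G$ avoids every edge in $F$ are kept, and $y^F_{\{P,P+e\}}=\hat{x}_{\{P,P+e\}}$. Therefore
\[
\sum_{v\in\calT_i}\sum_{p\in\calQ(v):\,\{P,P+e\}\in E(p)}z^{F,i}_p
\;\leq\;\sum_{v\in\calT_i}\sum_{p\in\calQ(v):\,\{P,P+e\}\in E(p)}\hat{f}^i_p
\;\leq\;\hat{x}_{\{P,P+e\}}\;=\;y^F_{\{P,P+e\}},
\]
where the last inequality is the corresponding capacity constraint of LP-GST that $(\hat{x},\hat{f})$ satisfies. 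This is pure bookkeeping.

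The coverage constraint is the heart of the lemma. Unwinding the tree-to-graph bijection, what I must show is that for each terminal $t_i\in T$,
\[
\sum_{p\in Q(t_i):\,E(p)\cap F=\emptyset}f^i_p\;\geq\;1.
\]
Here I would use the fact that $f^i$ is a (fractional) $r$-$t_i$ flow of value $\sum_{p\in Q(t_i)}f^i_p\geq k$ in which each edge $e$ carries flow $\sum_{p\in Q(t_i):e\in E(p)}f^i_p\leq x_e\leq 1$. The contribution to the flow from paths that \emph{do} meet $F$ is bounded by a union bound over the at most $k-1$ edges of $F$:
\[
\sum_{p\in Q(t_i):\,E(p)\cap F\neq\emptyset}f^i_p
\;\leq\;\sum_{e\in F}\sum_{p\in Q(t_i):\,e\in E(p)}f^i_p
\;\leq\;\sum_{e\in F}x_e\;\leq\;|F|\;\leq\;k-1.
\]
Subtracting from the total flow of at least $k$ leaves at least $1$ unit of flow on paths disjoint from $F$, which is exactly the desired coverage bound after applying the mapping $\hat f^i_p=f^i_p$ and zeroing out the removed paths.

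The only subtle point, and what I would regard as the ``main obstacle'' to keep in mind, is that the coverage argument relies crucially on the edge-capacity bound $x_e\leq 1$ in LP-k-DST*; without it, removing $k-1$ edges could destroy all the flow. This is why the $x_e\leq 1$ constraint is written explicitly in the LP, and the proof would emphasize that the coverage bound is exactly $k-(k-1)=1$, which is tight and dictates the choice $|F|=k-1$ that the algorithm must tolerate. Everything else is symbol pushing through the suffix-tree bijection between paths in $G$ and vertices of $\calG$.
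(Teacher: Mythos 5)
Your proof is correct, and for the coverage constraint it takes a genuinely different (and arguably cleaner) route than the paper. The paper argues via Max-Flow-Min-Cut: since $(x,f,y)$ supports a flow of value $k$ and removing the at most $k-1$ edges of $F$ can only reduce the min cut by $|F|$ (using $x_e\leq 1$), the graph $G\setminus F$ still supports a flow of value $1$, from which the paper asserts $\sum_{p\in Q(t_i):E(p)\cap F=\emptyset}f^i_p\geq 1$. You instead upper-bound the flow mass lost to paths meeting $F$ directly, via the union bound
\[
\sum_{p\in Q(t_i):\,E(p)\cap F\neq\emptyset}f^i_p
\leq\sum_{e\in F}\sum_{p\in Q(t_i):\,e\in E(p)}f^i_p
\leq\sum_{e\in F}x_e\leq|F|\leq k-1,
\]
and subtract from $k$. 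Your version is not merely an alternative: it is more precise, because the Max-Flow-Min-Cut statement only guarantees that \emph{some} $r$-$t_i$ flow of value $1$ exists in $G\setminus F$, whereas what the lemma actually needs is that the \emph{given} path decomposition $f^i$ (from which $z^F$ is built) puts mass at least $1$ on paths avoiding $F$; your averaging argument establishes exactly this, closing a small gap in the paper's wording. You also correctly flag that the argument uses $|F|\leq k-1$ and the capacity bound $x_e\leq 1$, both of which are essential and which the paper's phrasing of the lemma (``any subset $F$'') somewhat obscures. The capacity-constraint check is the same as the paper's.
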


\begin{proof}
First, observe that $z^{F,i}_p>0$ only if a path $p$ contains no edges in $\calF$.
So, by construction, $(y^F,z^F)$ satisfies
$z^{F,i}_p = \hat{f}^i_p \leq \hat{x}_e = y^F_e$
for all $e\in E(p)$.
Hence, $(y^F,z^F)$ satisfies the capacity constraint.

Next we show that $(y^F,z^F)$ satisfies the connectivity constraint.
Consider the solution $(x,f,y)$ to LP-k-DST*.
By the feasibility of $(x,f,y)$ and the Max-Flow-Min-Cut theorem, 
the graph $G\setminus F$ with capacities $\{x_e\}_{e\in G\setminus F}$
can support a flow of value one from $r$ to any terminal $t_i$. 
This implies that
$\sum_{p\in Q(t_i):E(p)\cap F=\emptyset}f^i_p \geq 1$.
Consequently, we have 
\[
\begin{array}{rll}
\sum_{p\in Q(t_i):E(p)\cap F=\emptyset}f^i_p 
&= \sum_{p\in \calT_i:E(p)\cap F=\emptyset}\;\sum_{p'\in \calQ(v)}\hat{f}^i_{p'}\\
&= \sum_{v\in \calT_i}\;\sum_{p'\in \calQ(v):E(p')\cap \calF=\emptyset}\hat{f}^i_{p'}\\
&= \sum_{v\in \calT_i}\;\sum_{p'\in \calQ(v):E(p')\cap \calF=\emptyset}z^{F,i}_{p'}\\
&= \sum_{v\in \calT_i}\;\sum_{p'\in \calQ(v)}z^{F,i}_{p'}\\
&\geq 1.
\end{array}
\]

All the other constraints are satisfied 
because $(y^F,z^F)$ is constructed from $(\hat{x},\hat{f})$.
Thus, $(y^F,z^F)$ is feasible to LP-GST on $\calG \setminus \calF$.
\end{proof}

Lemma~\ref{lem:feasibility-of-calG-minus-F} implies that
we can run the GKR Rounding algorithm on $(y^F,z^F)$.
The following is the property of GKR Rounding.

\begin{lemma}[\cite{GargKR00}]
\label{lem:prop-of-GKR}
There exists a randomized algorithm such that,
given a solution $(\hat{x},\hat{f})$ to LP-GST on a tree $\calG$
with height $D$,
the algorithm outputs a subgraph $\calH\subseteq\calG$
so that the probability that any subset of vertices 
$U\subseteq V(\calG)$ is connected to the root is at least
\[
\Pr[\mbox{$\calH$ has an $r,U$-path.}] 
\geq \frac{\sum_{v\in U}\sum_{p\in\calQ(v)}\hat{f}^i_p}{O(D)}
\]
Moreover, the probability that each edge is chosen is at most $\hat{x}_e$.
That is, $\E[\cost(\calH)] = \cost(\hat{x},\hat{f})$.
The running time of the algorithm is $O(|E(\calG)| + |V(\calG)|)$. 
\end{lemma}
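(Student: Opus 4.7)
The plan is to adopt the top-down tree rounding scheme of Garg, Konjevod and Ravi. First I would root $\calG$ at $r$ so that every non-root vertex $v$ has a unique parent edge $e_v$; since $\calG$ is a tree, the flow $\hat{f}$ can be interpreted as flow moving strictly downwards, with $\hat{x}_{e_v}$ equal to the total flow crossing $e_v$. The rounding then proceeds recursively from $r$: conditioned on the parent edge of $v$ being included in $\calH$, each child edge $e_w$ of $v$ is included independently with conditional probability $\hat{x}_{e_w}/\hat{x}_{e_v}$, while edges whose parent edge is excluded are automatically dropped. Telescoping along the unique root-to-$e$ path shows that the marginal probability of including any edge $e$ equals $\hat{x}_e$, which immediately yields $\E[\cost(\calH)] = \sum_{e} c_e \hat{x}_e = \cost(\hat{x},\hat{f})$.

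For the connectivity bound, fix a group $\calT_i$ and write $F = \sum_{v\in\calT_i}\sum_{p\in\calQ(v)}\hat{f}^i_p$. I would proceed by induction on $\calG$, tracking at each internal vertex $u$ the probability $\rho_u$ that some terminal of $\calT_i$ in the subtree rooted at $u$ is joined to $u$ in $\calH$, conditioned on $u$ being reached. The crucial step is a second-moment (Paley--Zygmund) computation: if $u$ has children $w_1,\dots,w_t$ carrying flows $F_1,\dots,F_t$ into $\calT_i$, then the contributions from the $w_j$-subtrees are independent given that $u$ is reached, so bounding $\E[(\text{surviving flow})^2]$ by an appropriate sum of $F_j$ and $\hat{x}_{e_{w_j}}$ terms gives a recursive inequality of the form $\rho_u \geq c \cdot (\text{something involving }\rho_{w_j})$ for an absolute constant $c$. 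Accumulating this recursion over the $D$ levels of $\calG$ loses at most a factor of $O(D)$, and delivers the claimed lower bound $\Pr[\calH\text{ reaches }\calT_i] \geq \Omega(F/D)$.

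The runtime bound follows because the top-down sampling procedure visits each edge of $\calG$ once and performs constant work per edge, and rooting and precomputing conditional probabilities is also linear. I expect the main obstacle to be the recursive probability bound described in the second paragraph: carefully aggregating the contributions from sibling subtrees and showing that the per-level loss is bounded by an absolute constant, so that only an $O(D)$ factor accumulates rather than a polynomial in $|V(\calG)|$, is the technical heart of the GKR analysis. Since the lemma is quoted verbatim from \cite{GargKR00}, I would ultimately invoke their theorem rather than reproving the second-moment argument from scratch, and only verify that the LP-solution $(\hat{x},\hat{f})$ produced in Section~\ref{sec:algo-kDST} satisfies the hypotheses of that theorem, which follows directly since on a tree the path formulation and the edge formulation of LP-GST coincide.
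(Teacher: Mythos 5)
The paper states this lemma as a direct citation to \cite{GargKR00} without supplying a proof, using it as a black box (the only description of the algorithm appears briefly inside the proof of Lemma~\ref{lem:backward-feasible-kDST}). Your proposal correctly sketches the GKR top-down conditional rounding scheme, its cost and connectivity analyses, and appropriately defers to the original source for the technical second-moment argument, which is exactly what the paper does.
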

 
Since $(y^F,z^F) \leq (\hat{x},\hat{f})$ (coordinate-wise),
we can show that running GKR Rounding 
on $(\hat{x},\hat{f})$ simulates 
the runs on $(y^F,z^F)$ for all $F\subseteq E(G)$
with $|F|\leq k-1$, simultaneously.

\begin{lemma}
\label{lem:backward-feasible-kDST}
Let $\calH$ be a subgraph of $\calG$ obtained by
running GKR Rounding on $(\hat{x},\hat{f})$,
and let $H$ be a subgraph of $G$ corresponding to $\calH$.
Then, for any subset of edges $F\subseteq E(G)$ with $|F|\leq k-1$
and for any terminal $t_i\in T$, 
\[
\Pr[\mbox{$H\setminus F$ has an $r,t_i$-path}] \geq \frac{1}{O(D)}.
\]
\end{lemma}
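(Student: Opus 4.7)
The plan is to obtain the bound by coupling the GKR Rounding run on $(\hat{x},\hat{f})$ with the (hypothetical) runs on $(y^F,z^F)$ for every subset $F \subseteq E(G)$ with $|F| \leq k-1$, and then applying Lemma~\ref{lem:prop-of-GKR} on the smaller solution. First I would recall that there is a natural bijection between $r$-rooted paths in $G$ and $r$-rooted paths in the tree $\calG = \calR(G)$; under this bijection an $r,t_i$-path in $G$ that avoids $F$ corresponds to an $r,\calT_i$-path in $\calG$ that avoids $\calF = \{\{p,p+e\}: p+e \in Q, e \in F\}$. Hence it suffices to prove that $\Pr[\calH$ has an $r,\calT_i$-path in $\calG \setminus \calF] \geq 1/O(D)$.

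Next I would exhibit a coupling of the GKR rounding on the two solutions. Recall that GKR Rounding can be implemented edge-by-edge down the tree: for each tree edge $e$ with parent edge $e'$, include $e$ conditioned on $e'$ being included with probability $\hat{x}_e/\hat{x}_{e'}$, using independent $[0,1]$-uniform random variables $U_e$, one per edge. Running the rounding on $(y^F,z^F)$ with the same random variables $U_e$ but thresholds $y^F_e/y^F_{e'}$ produces an output $\calH^F$. Since $y^F \leq \hat{x}$ coordinate-wise and $y^F_e = 0$ whenever $e \in \calF$, a routine monotonicity check (which I would phrase carefully) shows that $\calH^F \subseteq \calH$ and $\calH^F \subseteq \calG \setminus \calF$. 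Consequently, any $r,\calT_i$-path produced by $\calH^F$ is simultaneously a path in $\calH$ and avoids $\calF$.

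Now I would apply Lemma~\ref{lem:prop-of-GKR} to the coupled run on $(y^F,z^F)$, which by Lemma~\ref{lem:feasibility-of-calG-minus-F} is a feasible LP-GST solution on the tree $\calG \setminus \calF$ of height $D$. The lemma yields
\[
\Pr[\calH^F \text{ has an } r,\calT_i\text{-path}]
  \geq \frac{\sum_{v \in \calT_i} \sum_{p \in \calQ(v)} z^{F,i}_p}{O(D)} \geq \frac{1}{O(D)},
\]
where the last inequality is the connectivity constraint established in the proof of Lemma~\ref{lem:feasibility-of-calG-minus-F}. Combining this with the coupling gives the same lower bound for the event that $\calH$ contains an $r,\calT_i$-path in $\calG \setminus \calF$, and translating through the path bijection yields the claim.

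The main obstacle I anticipate is formalizing the coupling step: one must verify that the product-form of GKR's conditional inclusion probabilities really does yield the monotone nesting $\calH^F \subseteq \calH$ under the shared random bits, especially to handle edges in $\calF$ (where $y^F_e = 0$ forces the subtree rooted at $e$ to be empty in $\calH^F$, regardless of whether $\calH$ includes $e$). Once the coupling is cleanly stated, the remaining steps are direct consequences of the already-established Lemmas~\ref{lem:feasibility-of-calG-minus-F} and~\ref{lem:prop-of-GKR}.
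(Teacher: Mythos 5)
Your proposal is correct and takes essentially the same approach as the paper: both arguments run GKR Rounding on $(\hat{x},\hat{f})$, compare it to a (coupled) run on $(y^F,z^F)$, and then invoke Lemma~\ref{lem:prop-of-GKR} together with Lemma~\ref{lem:feasibility-of-calG-minus-F} to bound the success probability. On the coupling subtlety you flag: note that ``$y^F \leq \hat{x}$ coordinate-wise'' is \emph{not} by itself enough to get monotone nesting of GKR outputs, since the marking thresholds are ratios $y^F_e/y^F_{\varrho(e)}$ and could in principle increase when both numerator and denominator shrink; what saves you here is that $y^F$ and $\hat{x}$ actually \emph{agree} on every edge outside $\calF$, so the conditional thresholds are identical for any edge all of whose ancestors avoid $\calF$, while edges in $\calF$ have threshold $0$ and are never marked. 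Hence the coupled $\calH^F$ is precisely the root-reachable part of $\calH \setminus \calF$, which yields the desired inequality; the paper compresses this into the remark that removing $\calF$ only affects paths through $\calF$.
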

\begin{proof}
Let us briefly describe the work of GKR Rounding.
The algorithm marks each edge $e$ in the tree with
probability $x_e/x_{\varrho(e)}$, where $\varrho(e)$ is the parent of
an edge $e$ in $\calG$, which is unique.
Then the algorithm picks an edge $e$ if all of its ancestors
are marked. 
Clearly, removing any set of edges $\calF$ from $\calG$
only affects paths that contain an edge in $\calF$.

Let $(y^F,z^F)$ be defined from $(\hat{x},\hat{f})$ as above.
This LP solution is defined on a graph $\calG\setminus\calF$.
Thus, the probability of success is not affected by removing $\calF$
from the graph. 
By Lemma~\ref{lem:feasibility-of-calG-minus-F},
we can run GKR Rounding on $(y^F,z^F)$ and obtain
a subgraph $\calH^F$ of $\calG \setminus \calF$.
Since $z^F_p \leq \hat{f}_p$ for all paths $p\in\calQ$
and $z^F_p=0$ for all $p\in \calQ:E(p)\cap \calF\neq\emptyset$,
we have from Lemma~\ref{lem:prop-of-GKR}
and Lemma~\ref{lem:feasibility-of-calG-minus-F} that
\[
\begin{array}{rl}
\Pr[\mbox{$H\setminus F$ has an $r,t_i$-path}] 
&= 
  \Pr[\mbox{$\calH\setminus\calF$ has an $r,\calT_i$-path}]\\
&\geq 
  \Pr[\mbox{$\calH^F$ has an $r,\calT_i$-path}]\\
&\geq 
  \frac{\sum_{v\in\calT_i}\;\sum_{p\in\calQ(v)}z^F_p}{O(D)}\\
&\geq 
  \frac{1}{O(D)}.
\end{array}
\]
\end{proof}

Finally, we recall that Algorithm~\ref{algo:kdst} employs
GKR Rounding on $(\hat{x},\hat{f})$ for $2Dk\log_2 n$ times.
So, for any subset of $k-1$ edges $F\subseteq E(G)$
and for any terminal $t_i\in T$,
there exists one subgraph that has an $r,t_i$-path 
that contains no edge in $F$ with large probability.
In particular, the union is a feasible solution to $k$-DST
with at least constant probability.
\begin{lemma}
Consider the run of Algorithm~\ref{algo:kdst}.
The solution subgraph $H=\bigcup_iZ_i$ 
is a feasible solution to $k$-DST
with probability at least $1/n$.
\end{lemma}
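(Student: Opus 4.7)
The plan is to derive feasibility of $H = \bigcup_i Z_i$ by combining Menger's theorem with a union bound over all possible ``obstructions''. By Menger's theorem, $H$ contains $k$ edge-disjoint $r,t_i$-paths for every $t_i \in T$ if and only if, for every subset $F \subseteq E(G)$ with $|F| \leq k-1$ and every terminal $t_i \in T$, the graph $H \setminus F$ still contains at least one $r,t_i$-path. It suffices to consider $F$ of size exactly $k-1$, so the number of ``bad pairs'' $(F, t_i)$ to control is at most $\binom{|E(G)|}{k-1} \cdot h \leq n^{2(k-1)} \cdot n = n^{2k-1}$.

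For each fixed pair $(F, t_i)$, Lemma~\ref{lem:backward-feasible-kDST} states that a single run of GKR Rounding on $(\hat{x}, \hat{f})$ produces a subgraph $Z_i$ with $\Pr[Z_i \setminus F \text{ has an } r,t_i\text{-path}] \geq 1/(cD)$ for some absolute constant $c$. Since $Z_j \subseteq H$ implies $Z_j \setminus F \subseteq H \setminus F$, the event that $H \setminus F$ contains no $r,t_i$-path is contained in the intersection of the corresponding failure events over all $2Dk\log_2 n$ iterations. Because these iterations are independent, I get
\[
\Pr[H \setminus F \text{ has no } r,t_i\text{-path}]
  \leq \left(1 - \tfrac{1}{cD}\right)^{2Dk \log_2 n}
  \leq e^{-2k \log_2 n / c}
  = n^{-\Omega(k)}.
\]

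A union bound over the at most $n^{2k-1}$ bad pairs then bounds the probability that $H$ is not $k$-edge-connected from $r$ to some terminal by $n^{2k-1} \cdot n^{-\Omega(k)}$. With the constant factor in the $2Dk\log_2 n$ iterations chosen large enough relative to the constant $c$ from Lemma~\ref{lem:prop-of-GKR}, this quantity is at most $1 - 1/n$, so by Menger's theorem $H$ is a feasible solution to $k$-DST with probability at least $1/n$.

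The only real obstacle is the arithmetic of aligning constants: one must verify that the per-iteration success probability $1/(cD)$, after amplification by the chosen number of independent runs, dominates the cardinality $n^{2k-1}$ of $(F, t_i)$ pairs appearing in the union bound. The non-trivial work (the tree embedding, the LP feasibility of $(y^F, z^F)$ on $\calG \setminus \calF$, and the per-pair GKR success probability) has already been absorbed into the earlier lemmas, so this final step is a routine probability amplification argument.
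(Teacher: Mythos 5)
Your proof is correct and takes essentially the same approach as the paper: invoke Lemma~\ref{lem:backward-feasible-kDST} for each fixed pair $(F,t_i)$, amplify over the $2Dk\log_2 n$ independent runs, and take a union bound over the at most $n^{2k-1}$ bad pairs. You make explicit two things the paper leaves implicit — the Menger's theorem reduction of feasibility to cut-avoidance, and the need to align the hidden constant in the $O(D)$ with the constant in the number of iterations — but these are clarifications, not a different argument.
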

\begin{proof}
For $i=1,2,\ldots,2Dk\log_2 n$, let $Z_i$ be a subgraph of $G$ 
obtained by running GKR Rounding on $(\hat{x},\hat{f})$
and mapping the solution back to a subgraph of $G$
as in Algorithm~\ref{algo:kdst}.
By Lemma~\ref{lem:backward-feasible-kDST}, $Z_i\setminus F$ 
has an $r,t_i$-path with probability $\Omega(1/D)$.
Since each $Z_i$ is sampled independently, we have
\[
\Pr[\forall i\,\mbox{$Z_i\setminus F$ has no $r,t_i$-path}]
\leq \left(1-\frac{1}{O(D)}\right)^{2Dk\log_2 n}
\leq \left(\frac{1}{e}\right)^{2k\log_2 n}
\leq n^{-2k}.
\]
We have at most $|E(G)|^{k-1}\leq n^{2(k-1)}$ such sets $F$ and
at most $|T|\leq n$ terminals.
So, there are at most $n^{2k-1}$ bad events where 
there exists an edge-set of size $k-1$ that 
separates the root $r$ and some terminal $t_i\in T$.
Therefore, by union bound, $H=\bigcup_iZ_i$ is a feasible solution 
to $k$-DST with probability at least $1/n$.
\end{proof}

This completes the proof of Theorem~\ref{thm:approx-k-dst}.
Note that, for the case of DST ($k=1$), we only need to run GKR
Rounding for $O(D\log h)$ times, thus implying an approximation ratio of
$O(D\log h)$. 

\section{Conclusion and Discussion}
\label{sec:conclusion}

We presented the first non-trivial approximation algorithm for $k$-DST
in a special case of a $D$-shallow instance, which exploits
the reduction from DST to GST. 
We hope that our techniques will shed some light in designing 
an approximation algorithm for $k$-DST in general case and
perhaps lead to a bi-criteria approximation algorithm in the same manner
as in \cite{ChalermsookGL15}.

One obstruction in designing an approximation algorithm in directed
graphs is that there is no ``true'' (perhaps, probabilistic) tree-embedding for
directed graphs.
Both devising a tree-embedding for directed graphs and designing an
approximation algorithm for $k$-DST with $k \geq 2$ are big open
problems in the area.
Another open problem, which is considered as the most challenging one
by many experts, is whether there exists a polynomial-time algorithm
for DST that yields a sub-polynomial approximation ratio.

\subparagraph*{Acknowledgements.}
Our work was inspired by the works of 
Rothvo{\ss}~\cite{Rothvoss11} and 
Friggstad~et~al.~\cite{FriggstadKKLST14}
and by discussions with Joseph Cheriyan and Lap Chi Lau.
We also thank Zachary Friggstad for useful discussions.

\bibliographystyle{alpha}
\bibliography{kdst}

\end{document}